\newtheorem{assumption}{Assumption}
\newtheorem{definition}{Definition}
\newtheorem{proposition}{Proposition}
\newtheorem{theorem}{Theorem}
\newcommand\scalemath[2]{\scalebox{#1}{\mbox{\ensuremath{\displaystyle #2}}}}
\newcommand\ubar[1]{\underaccent{\bar}{#1}}
\title{Deep Long-Short Term Memory networks: Stability properties and Experimental validation}
\author{Fabio Bonassi$^{1, *}$ \thanks{$^*$ Corresponding author}, Alessio La Bella$^{1}$, Giulio Panzani$^{1}$, Marcello Farina$^{1}$, Riccardo Scattolini$^{1}$
	\thanks{$^{1}$ The authors are with the Dipartimento di Elettronica, Informazione e Bioingegneria, Politecnico di Milano, Via Ponzio 34/5, 20133, Milano, Italy. E-mail: {\tt\small name.surname@polimi.it}}}
\begin{document}

\maketitle
\thispagestyle{empty}
\pagestyle{empty}

\begin{abstract}
The aim of this work is to investigate the use of Incrementally Input-to-State Stable ($\delta$ISS) deep Long Short Term Memory networks (LSTMs) for the identification of nonlinear dynamical systems.
We show that suitable sufficient conditions on the weights of the network can be leveraged to setup a training procedure able to learn provenly-$\delta$ISS LSTM models from data.
The proposed approach is tested on a real brake-by-wire apparatus to identify a model of the system from input-output experimentally collected data.
Results show satisfactory modeling performances.
\end{abstract}

\begin{keywords}
	Recurrent Neural Networks, Incremental Input-to-State Stability, nonlinear system identification, brake-by-wire.
\end{keywords}
%===============================================================================

 \DraftwatermarkOptions{%
 angle=0,
 hpos=0.5\paperwidth,
 vpos=0.96\paperheight,
 fontsize=0.012\paperwidth,
 color={[gray]{0.2}},
 text={
   \parbox{0.99\textwidth}{This manuscript is an extended version of a paper accepted for the 2023 European Control Conference (ECC'23). Copyright may be transferred without notice, after which this version may no longer be accessible.}},
 }

\section{Introduction}

In recent years, a progressive interchange of ideas between the machine learning and the control systems communities has led to a fruitful diffusion of machine learning techniques and tools in the context of identification and control of dynamical systems.
We refer, in particular, to Neural Networks (NNs): albeit their use in the context of the control system design has been investigated since the nineties \cite{hunt1992neural}, the availability of large and informative datasets, of cheap computational power, and of open-source software tools for implementing and training these networks has renewed the interest in these techniques \cite{goodfellow2016deep}.

Among the many NN architectures available, Recurrent Neural Networks (RNNs) are known to be the most suitable for system identification purposes \cite{bianchi2017recurrent}, since their stateful nature allows them to store memory of past data.
Particular successful and popolar have been Long Short-Term Memory networks (LSTMs, \cite{hochreiter1997long}) and Gated Recurrent Units (GRUs, \cite{cho2014learning}), whose gated architectures make them less prone to the well-known vanishing and exploding gradient problems which normally plague RNNs \cite{pascanu2013difficulty}.
In contrast, these two architectures have demonstrated remarkable capabilities in learning dynamical systems \cite{mohajerin2019multistep}.

In the context of control NNs can be employed in a vast multitude of ways
\cite{xu2022review, bonassi2022survey}, including indirect data-driven control synthesis.
In particular, NN models of dynamical systems, identified from input-output data collected through experiments \cite{schoukens2019nonlinear}, can then be used to synthesize model-based controllers, e.g., with Model Predictive Control (MPC) \cite{rawlings2017model}.

This strategy, which allows to naturally blend machine-learning methods (for system identification) with traditional model-based control design strategies, has gained significant interest in both academia, see e.g. \cite{wu2019machine, levin1996control, terzi2021lstm} and in industry, see e.g. \cite{lanzetti2019recurrent, nagy2007model, wong2018recurrent}.
Once the gated RNN model of the system is identified, in accordance with the indirect data-driven control paradigm, the Certainty Equivalence Principle (CEP) is invoked, that allows the controller to be designed based on the identified model, e.g. by resorting to MPC strategies.
In this context, being able to certify the model's Incremental Input-to-State Stability ($\delta$ISS, \cite{bayer2013discrete}) allows the designer to synthesize control laws with closed-loop stability guarantees \cite{terzi2021lstm, bonassi2022imc} and robust zero-error tracking of constant reference signals \cite{bonassi2021nnarx, bonassi2022offset}.
As discussed in \cite{bonassi2022survey}, this property is also beneficial in terms of model's robustness against input perturbations, consistency of modeling performances in spite of wrong initialization, and safety \cite{bonassi2020lstm}.

The problem of training provenly-$\delta$ISS RNNs has been recently considered in the literature.
Sufficient conditions for the networks' $\delta$ISS have been devised in \cite{bonassi2021nnarx} for Neural NARXs networks, i.e., ARX models with feed-forward NNs as nonlinear regressors.
In \cite{bonassi2021stability}, the $\delta$ISS of GRUs has been investigated, devising conditions for the $\delta$ISS of both \emph{shallow} (i.e., single-layer) and \emph{deep} (i.e., multi-layer) GRUs.
Sufficient conditions for the $\delta$ISS of LSTMs have been provided in \cite{terzi2021lstm}, but limitedly to shallow networks, which may not be suited to identify complex real systems \cite{bianchi2017recurrent}.
\smallskip

The objective of this paper is twofold: (\emph{i}) to establish $\delta$ISS conditions for deep LSTM networks, (\emph{ii}) to apply the proposed framework for the identification of a real system with nontrivial dynamics, i.e., a brake-by-wire apparatus.
\smallskip

The paper is structured as follows. 
In Section \ref{sec:lstm}, the state-space formulation of deep LSTM networks is described, and the sufficient conditions for their $\delta$ISS are devised in Section \ref{sec:stability}.
In Section \ref{sec:example}, the proposed approach is tested on a real brake-by-wire apparatus, allowing to identify a $\delta$ISS model of the system.

\begin{figure*}[t]
	\centering
	\includegraphics[width=0.85\textwidth]{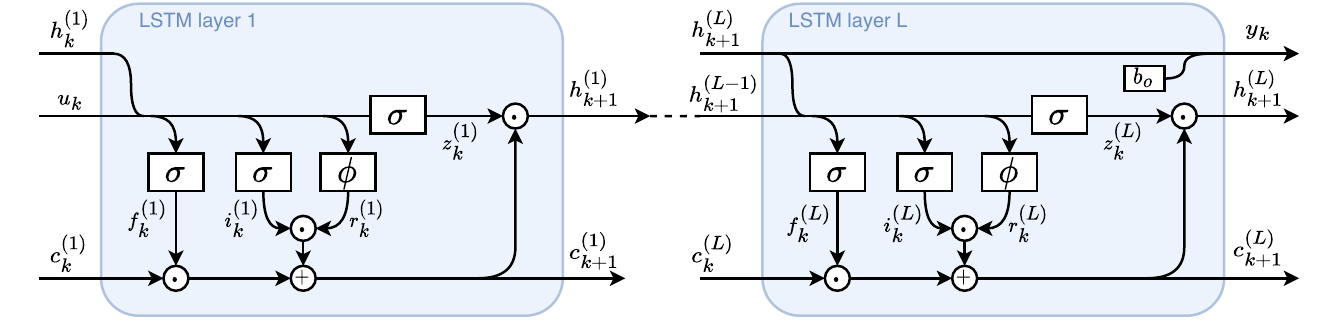}
	\caption{Schematic of a deep LSTM architecture, consisting in the concatenation of $L$ LSTM layers, where each layer takes the updated hidden state of the preceding layer as input.}
	\label{fig:deeplstm}
\end{figure*}

\subsection*{Notation}
The following notation is adopted in the paper. 
Given a vector $v \in \mathbb{R}^n$, we denote by $v^\prime$ its transpose, by $[ v ]_i$ its $i$-th component, and by $\| v \|_p$ its $p$-norm.
The Hadamard (element-wise) product between two vectors $v$ and $w$ of the same dimensions is denoted as $v \circ w$.
Given a $n$-by-$m$ matrix $A$, $ \| A \|_p$ is used to indicate its induced $p$-norm.

For the sake of conciseness, for time-varying vectors the time index $k$ is reported as a subscript, e.g. $v_k$. 
Sequences of vectors spanning from the time index $k_1$ to $k_2 \geq k_1$ are denoted as $v_{k_1 : k_2}$, i.e. $v_{k_1 : k_2} = \{ v_{k_1}, v_{k_1 +1}, ..., v_{k_2} \}$.
The $\ell_{p, q}$ norm of a sequence is defined as $\| v_{k_1 : k_2} \|_{p, q} = \big\| [ \| v_{k_1} \|_p,  ..., \| v_{k_2} \|_p ]^\prime \big\|_q$,
for which a notable case is the $\ell_{p, \infty}$ norm, where $\| v_{k_1 : k_2} \|_{p, \infty} = \max_{k \in \{ k_1, ..., k_2\} } \| v_k \|_p$. 

Finally, we denote by $\sigma(x) = \frac{1}{1 + e^{-x}}$  and by $\phi(x) = \tanh(x)$ the sigmoidal  and $\tanh$ activation functions, respectively.
Note that when applied to vectors, these activation functions are intended to be applied element-wise.

\section{Deep LSTM architecture} \label{sec:lstm}
Motivated by the well-known superior modeling power of deep recurrent networks \cite{bianchi2017recurrent}, in this paper we focus on deep LSTMs for nonlinear system identification.

\begin{subequations} \label{eq:lstm:model}
Deep LSTMs consist of the concatenation of $L > 1$ LSTM layers, where each layer is described by the state-space equation of shallow LSTMs, see Figure \ref{fig:deeplstm}.
In particular, each layer $l \in \{ 1, ..., L \}$ is ruled by
\begin{equation} \label{eq:lstm:model:statespace}
\begin{dcases}
	c_{k+1}^{(l)} = f_k^{(l)} \circ c_{k}^{(l)} + i_{k}^{(l)} \circ r_k^{(l)}, \\
	h_{k+1}^{(l)} = z_{k}^{(l)} \circ \phi(h_{k}^{(l)}),
\end{dcases}
\end{equation}
where $c_k^{(l)} \in \mathbb{R}^{n_c^{(l)}}$ and $h_k^{(l)} \in \mathbb{R}^{n_c^{(l)}}$ are the cell and hidden states, respectively, with $n_{c}^{(l)}$ indicating the (a priory selected) amount of neurons of the layer, chosen.
The layer's state vector is the concatenation of the cell and hidden states, i.e., $x_k^{(l)} = [ c_k^{(l) \prime}, h_{k}^{(l) \prime} ]^\prime \in \mathbb{R}^{n_x^{(l)}}$, with $n_x^{(l)} = 2 n_c^{(l)}$.

The terms $f_k^{(l)}$, $i_k^{(l)}$, and $z_k^{(l)}$ appearing in \eqref{eq:lstm:model:statespace} are the so-called \emph{gates} of the network, which allow to regulate the flow and storage of information within the layer, while $r_k^{(l)}$ is the so-called \emph{squashed input}. They read as
\begin{equation} \label{eq:lstm:model:gates}
	\begin{aligned}
		f_k^{(l)} &= \sigma \big( W_f^{(l)} u_k^{(l)} + U_f^{(l)} h_k^{(l)}  + b_f^{(l)} \big), \\
		i_k^{(l)} &= \sigma \big( W_i^{(l)} u_k^{(l)} + U_i^{(l)} h_k^{(l)}  + b_i^{(l)} \big), \\
		z_k^{(l)} &= \sigma \big( W_z^{(l)} u_k^{(l)} + U_z^{(l)} h_k^{(l)}  + b_z^{(l)}\big), \\
		r_k^{(l)} &= \phi \big( W_r^{(l)} u_k^{(l)} + U_r^{(l)} h_k^{(l)}  + b_r^{(l)}\big).
	\end{aligned}
\end{equation}
Note that, since the sigmoid function $\sigma(\cdot)$ is bounded in $(0, 1)$, the gates are vectors of subunitary positive elements: gates whose components are close to $1$ are said to be open, i.e., gates allowing the flow of information; those whose components are close to $0$ are said to be closed.
Similarly, since the $\tanh$ function $\phi(\cdot)$ is bounded in $(-1, 1)$, the squashed input is a vector of components lying in such range.

The gates and squashed input introduced in \eqref{eq:lstm:model:gates} also depend on $u_k^{(l)}$, i.e., the layer's input vector.
In deep architectures, such input is indeed defined as
\begin{equation} \label{eq:lstm:model:input}
	u_k^{(l)} =  \left\{ \begin{array}{ll}
		u_{k} & \text{if } \, l = 1, \\
		h_{k+1}^{(l-1)} \quad & \text{if } \, l \in \{ 2, ..., L \}.
	\end{array} \right.
\end{equation}
Specifically, the first layer is fed by the model's input $u_k$, while the input of any following layer $l \in \{ 2, ..., L \}$ is defined as the updated hidden state of the immediately preceding layer, i.e., $h_{k+1}^{(l-1)}$.
Finally, the model's output is defined as an affine transformation of the last layer's hidden state, 
\begin{equation}
	y_k = U_o h_k^{(L)} + b_o.
\end{equation}
\end{subequations}

Note that each layer is parametrized by the set of weights
\begin{equation*}
\begin{aligned}
	\Phi^{(l)} = \{ &W_{f}^{(l)}, U_f^{(l)}, b_f^{(l)}, W_{i}^{(l)}, U_i^{(l)}, b_i^{(l)}, W_{z}^{(l)}, U_z^{(l)}, b_z^{(l)}, \\
	&W_{r}^{(l)}, U_r^{(l)}, b_r^{(l)} \}.
\end{aligned}
\end{equation*}
Thus, letting  $\Phi = \{ U_o, b_o \} \cup \bigcup_{l=1}^L \Phi^{(l)} $ be the set of network's weights and denoting by $x_k = [ x_k^{(1) \prime}, ..., x_k^{(L) \prime} ]^\prime$
the state vector of the deep LSTM model, \eqref{eq:lstm:model} can be compactly represented by
\begin{equation} \label{eq:lstm:compact}
	\Sigma(\Phi): \, \begin{dcases}
		x_{k+1} = \varphi(x_{k}, u_k; \Phi) \\
		y_k = \psi(x_k; \Phi)
	\end{dcases},
\end{equation}
where $\varphi$ and $\psi$ are suitable function arising from equations \eqref{eq:lstm:model}.
We now introduce a customary assumption concerning the boundedness of the model's input. 
\begin{assumption} \label{ass:input}
	The input vector $u_k$ is unity bounded, i.e., $u_k \in \mathcal{U}$, where 
	\begin{equation}
		\mathcal{U} = \{ u \in \mathbb{R}^{n_u} : \| u \|_\infty \leq 1 \}.
	\end{equation}
\end{assumption}

Note that, when the control variables are subject to saturation, Assumption \ref{ass:input} can be easily fulfilled by means of a suitable normalization of the input, which is a common practice in deep learning \cite{goodfellow2016deep}.
The following proposition can hence be stated.
\smallskip

\begin{proposition} \label{prop:gates_bounds}
	For any layer $l \in \{ 1, ..., L \}$, the squashed input $r_k^{(l)}$ and the gates $f_k^{(l)}$, $i_k^{(l)}$, and $z_k^{(l)}$ are bounded as 
	\begin{subequations}
		\begin{align}
			-1 < - \bar{\phi}^{(l)}_r &\leq [ r_k^{(l)} ]_j \leq \bar{\phi}_r^{(l)} < 1, \\
			0 < 1 - \bar{\sigma}_f^{(l)} & \leq [ f_k^{(l)} ]_j \leq \bar{\sigma}_f^{(l)} < 1,  \\
			0 < 1 - \bar{\sigma}_i^{(l)} & \leq [ i_k^{(l)} ]_j \leq \bar{\sigma}_i^{(l)} < 1, \\
			0 < 1 - \bar{\sigma}_z^{(l)} & \leq [ z_k^{(l)} ]_j \leq \bar{\sigma}_z^{(l)} < 1,
 		\end{align}
 		$\forall j \in \{ 1, ..., n_c^{(l)} \}$, where the bounds are defined as
	\end{subequations}
	\begin{subequations} \label{eq:lstm:gates_bounds}
		\begin{align}
			\bar{\phi}^{(l)}_r &= \phi( \| W_r^{(l)} \quad U_r^{(l)} \quad b_r^{(l)} \|_\infty), \\
			\bar{\sigma}^{(l)}_f &= \sigma( \| W_f^{(l)} \quad U_f^{(l)} \quad b_f^{(l)} \|_\infty), \\
			\bar{\sigma}^{(l)}_i &= \sigma( \| W_i^{(l)} \quad U_i^{(l)} \quad b_i^{(l)} \|_\infty), \\
			\bar{\sigma}^{(l)}_z &= \sigma( \| W_z^{(l)} \quad U_z^{(l)} \quad b_z^{(l)} \|_\infty).
 		\end{align}
	\end{subequations}
\end{proposition}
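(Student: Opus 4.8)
The plan is to reduce all eight bounds to two elementary facts: the monotonicity (and, for $\phi$, oddness) of the scalar activation functions, and a uniform $\ell_\infty$ bound on the pre-activation argument of each gate and of the squashed input.

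First I would show that every layer input is unity bounded, $\|u_k^{(l)}\|_\infty \le 1$. For $l=1$ this is exactly Assumption~\ref{ass:input}. For $l \in \{2,\dots,L\}$, \eqref{eq:lstm:model:input} gives $u_k^{(l)} = h_{k+1}^{(l-1)}$, and the hidden-state equation in \eqref{eq:lstm:model:statespace} writes each component as $[h_{k+1}^{(l-1)}]_j = [z_k^{(l-1)}]_j\,\phi([h_k^{(l-1)}]_j)$, a product of a sigmoid value in $(0,1)$ and a $\tanh$ value in $(-1,1)$, hence $|[u_k^{(l)}]_j| < 1$. The same computation shows $\|h_k^{(l)}\|_\infty < 1$ for every $l$ and every $k \ge 1$ (and for $k=0$ as well under the customary zero initialisation).

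Next, for a fixed layer $l$, I would bound the argument of, say, the forget gate: its $j$-th component is the inner product of the $j$-th row of the augmented matrix $[\,W_f^{(l)}\ \ U_f^{(l)}\ \ b_f^{(l)}\,]$ with the augmented vector $[\,u_k^{(l)\prime},\,h_k^{(l)\prime},\,1\,]^\prime$, whose $\ell_\infty$ norm equals $1$ by the previous step. Hölder's inequality bounds this component in absolute value by the $\ell_1$ norm of the row, and the maximum over $j$ is precisely the induced norm $\|\,W_f^{(l)}\ \ U_f^{(l)}\ \ b_f^{(l)}\,\|_\infty =: M_f$. Monotonicity of $\sigma$ then yields $\sigma(-M_f) \le [f_k^{(l)}]_j \le \sigma(M_f) = \bar\sigma_f^{(l)}$, while $\sigma(-M_f) = 1-\bar\sigma_f^{(l)}$ by the identity $\sigma(-x)=1-\sigma(x)$. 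Since $0 \le M_f < \infty$, we have $\bar\sigma_f^{(l)} \in [\tfrac12,1)$, so all the strict inequalities in the statement hold. The bounds for $i_k^{(l)}$ and $z_k^{(l)}$ follow verbatim with $f$ replaced by $i$ and $z$. For the squashed input $r_k^{(l)}$ the argument is bounded identically by $M_r := \|\,W_r^{(l)}\ \ U_r^{(l)}\ \ b_r^{(l)}\,\|_\infty$; monotonicity of $\phi=\tanh$ gives $\phi(-M_r) \le [r_k^{(l)}]_j \le \phi(M_r)$, and oddness turns this into $-\bar\phi_r^{(l)} \le [r_k^{(l)}]_j \le \bar\phi_r^{(l)}$ with $\bar\phi_r^{(l)} = \phi(M_r) \in [0,1)$.

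The only delicate point is the first step: for $l \ge 2$ the unity bound on $u_k^{(l)}$ rests on the previous layer's hidden state already being confined to $[-1,1]^{n_c^{(l-1)}}$, and the gate at time $k$ needs $h_k^{(l)}$ itself to be so confined. Both are immediate from the $\tanh$-squashing in the hidden-state equation for $k \ge 1$, but a word on the invariant set of the hidden state (or on zero initialisation) is needed to cover $k=0$. Everything else is Hölder's inequality together with the monotonicity and symmetry of $\sigma$ and $\phi$, so no genuine obstacle remains.
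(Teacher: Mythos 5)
Your proposal is correct and follows essentially the same route as the paper: establish that every layer's input (and hidden state) is unity bounded via Assumption~\ref{ass:input} and the sigmoid/\,$\tanh$ squashing in \eqref{eq:lstm:model:statespace}--\eqref{eq:lstm:model:input}, then bound each pre-activation by the induced $\infty$-norm of the augmented weight matrix and invoke monotonicity and the symmetries $\sigma(-x)=1-\sigma(x)$, $\phi(-x)=-\phi(x)$. The paper simply delegates this second step to the layer-wise application of \cite[Section~2.2]{terzi2021lstm}, whereas you spell it out (including the benign remark on the initialisation of $h_0^{(l)}$), so there is no substantive difference.
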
	
\begin{proof}
	See Appendix \ref{proof:gates_bounds}.
\end{proof}
\smallskip

In light of these bounds, an invariant set of \eqref{eq:lstm:model} can be defined.
In the next section, the $\delta$ISS property will then be analyzed with respect to this invariant set. 
We recall that a set $\mathcal{X}$ is said to be an invariant set if, for any $u \in \mathcal{U}$, $x \in \mathcal{X} \implies \varphi(x, u) \in \mathcal{X}$, with $\varphi(\cdot)$ defined in \eqref{eq:lstm:compact}. 
\smallskip

\begin{proposition} \label{prop:invariant_set}
	\begin{subequations} \label{eq:lstm:invset}
	An invariant set of the deep LSTM model \eqref{eq:lstm:model} is
	\begin{equation} \label{eq:lstm:invset:full}
		\mathcal{X} = \bigtimes_{l=1}^L \mathcal{X}^{(l)},
	\end{equation}
	where $\bigtimes_{l=1}^L $ indicates the Cartesian product and $\mathcal{X}^{(l)}$ is
	\begin{align}
		\mathcal{X}^{(l)} &= \mathcal{C}^{(l)} \times \mathcal{H}^{(l)}, \text{ where}\\
		\mathcal{C}^{(l)} &=  \bigg\{ c \in \mathbb{R}^{n_c^{(l)}} :  \| c \|_\infty \leq \frac{\bar{\sigma}^{(l)}_i \, \bar{\phi}^{(l)}_r}{1 - \bar{\sigma}^{(l)}_f} = \bar{c}^{(l)}   \bigg\}, \\
		\mathcal{H}^{(l)} &= \bigg\{ h \in \mathbb{R}^{n_c^{(l)}} : \| h \|_\infty \leq \phi(\bar{c}^{(l)}) = \bar{h}^{(l)} \bigg\},
	\end{align}
	and $\bar{\sigma}^{(l)}_f$, $\bar{\sigma}^{(l)}_i$, $\bar{\sigma}^{(l)}_z$, and $\bar{\sigma}^{(l)}_r$  are defined as in \eqref{eq:lstm:gates_bounds}.
	\end{subequations}
\end{proposition}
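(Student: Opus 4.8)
The plan is to exploit the cascade structure of \eqref{eq:lstm:model}: since the dynamics of layer $l$ in \eqref{eq:lstm:model:statespace} involve only that layer's own state $x_k^{(l)}$ and its input $u_k^{(l)}$, and since $\mathcal{X}$ is the Cartesian product \eqref{eq:lstm:invset:full}, it suffices to show that if $x_k \in \mathcal{X}$ and $u_k \in \mathcal{U}$ then $x_{k+1}^{(l)} \in \mathcal{X}^{(l)}$ for every $l$. I would prove this by induction on the layer index $l = 1, \dots, L$. The purpose of the induction is to guarantee that, when the update of layer $l$ is examined, its input $u_k^{(l)}$ is already known to be unity bounded — which is exactly what is needed to invoke Proposition \ref{prop:gates_bounds}: for $l = 1$ this is Assumption \ref{ass:input}; for $l \in \{2, \dots, L\}$ it holds because $u_k^{(l)} = h_{k+1}^{(l-1)}$ and the inductive step just completed gives $h_{k+1}^{(l-1)} \in \mathcal{H}^{(l-1)}$, i.e.\ $\|h_{k+1}^{(l-1)}\|_\infty \le \bar{h}^{(l-1)} = \phi(\bar{c}^{(l-1)}) < 1$. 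Consequently, at every layer the squashed input and the gates obey the bounds \eqref{eq:lstm:gates_bounds}.

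Fixing a layer $l$, I would first handle the cell state. Reading the first line of \eqref{eq:lstm:model:statespace} component-wise, applying the triangle inequality and Proposition \ref{prop:gates_bounds}, and using the inductive hypothesis $\|c_k^{(l)}\|_\infty \le \bar{c}^{(l)}$, one obtains, for every component $j$,
\begin{equation*}
	\big| \big[ c_{k+1}^{(l)} \big]_j \big| \;\le\; \bar{\sigma}_f^{(l)} \, \bar{c}^{(l)} \;+\; \bar{\sigma}_i^{(l)} \, \bar{\phi}_r^{(l)} .
\end{equation*}
The key remark is that $\bar{c}^{(l)}$ in \eqref{eq:lstm:invset} is precisely the fixed point of the affine map $t \mapsto \bar{\sigma}_f^{(l)} t + \bar{\sigma}_i^{(l)} \bar{\phi}_r^{(l)}$ — equivalently, the sum $\sum_{n \ge 0} \big(\bar{\sigma}_f^{(l)}\big)^{n} \bar{\sigma}_i^{(l)} \bar{\phi}_r^{(l)}$ of a geometric series that converges because $\bar{\sigma}_f^{(l)} = \sigma(\,\cdot\,) < 1$. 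Hence the right-hand side above is exactly $\bar{c}^{(l)}$, so $\|c_{k+1}^{(l)}\|_\infty \le \bar{c}^{(l)}$ and $c_{k+1}^{(l)} \in \mathcal{C}^{(l)}$.

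For the hidden state I would use the second line of \eqref{eq:lstm:model:statespace} together with the bound $0 < \big[z_k^{(l)}\big]_j \le \bar{\sigma}_z^{(l)} < 1$ from Proposition \ref{prop:gates_bounds}, the identity $|\phi(x)| = \phi(|x|)$, and the monotonicity of $\phi$ on $[0,\infty)$. Since the argument of $\phi$ in that update has infinity norm at most $\bar{c}^{(l)}$ by the cell-state bound just established, this yields $\|h_{k+1}^{(l)}\|_\infty \le \bar{\sigma}_z^{(l)} \, \phi(\bar{c}^{(l)}) \le \phi(\bar{c}^{(l)}) = \bar{h}^{(l)}$, hence $h_{k+1}^{(l)} \in \mathcal{H}^{(l)}$ and therefore $x_{k+1}^{(l)} \in \mathcal{X}^{(l)}$. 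Running this through all $l = 1, \dots, L$ gives $x_{k+1} \in \mathcal{X}$, which is the claim.

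I do not anticipate any deep obstacle here: the argument is essentially a propagation of the elementwise bounds of Proposition \ref{prop:gates_bounds}. The two points that require care are (i) checking $1 - \bar{\sigma}_f^{(l)} > 0$, so that $\bar{c}^{(l)}$ is finite and the fixed-point / geometric-series identity is legitimate (this is immediate from $\sigma(\,\cdot\,) < 1$), and (ii) writing out the layer-by-layer induction explicitly, so that the unity-boundedness of each $u_k^{(l)}$ — needed to apply Proposition \ref{prop:gates_bounds} at layer $l$ — is secured before it is used.
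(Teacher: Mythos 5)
Your proposal is correct and follows essentially the same route as the paper: the invariance of $\mathcal{X}$ is reduced layer-wise to the single-layer case, the key enabling observation being that each layer's input $u_k^{(l)}$ is unity bounded (Assumption \ref{ass:input} for $l=1$, the hidden-state bound of the preceding layer otherwise). The only difference is that where the paper simply invokes the shallow-LSTM results of \cite[Section~2.2]{terzi2021lstm} for each layer, you prove the single-layer invariance explicitly via the fixed-point identity $\bar{\sigma}_f^{(l)}\bar{c}^{(l)} + \bar{\sigma}_i^{(l)}\bar{\phi}_r^{(l)} = \bar{c}^{(l)}$ and the resulting bound $\|h_{k+1}^{(l)}\|_\infty \leq \phi(\bar{c}^{(l)})$, which is exactly the cited computation.
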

\begin{proof}
	See Appendix \ref{proof:invariant_set}.
\end{proof}
\smallskip

We are now in the position to analyze the $\delta$ISS property of the deep LSTM model \eqref{eq:lstm:compact}.

\section{Stability properties} \label{sec:stability}
\begin{definition}[$\mathcal{K}_\infty$ function]
	A continuous function $\Psi(s) : \mathbb{R}_{\geq 0} \mapsto \mathbb{R}_{\geq 0}$ is of class $\mathcal{K}_\infty$ if it is strictly increasing with its argument, $\Psi(0) = 0$, and $\Psi(s) \xrightarrow[s \to \infty]{} \infty$.
\end{definition}
\vspace{1mm}

\begin{definition}[$\mathcal{KL}$ function]
	A continuous function $\Psi(s) : \mathbb{R}_{\geq 0} \times \mathbb{R}_{\geq 0} \mapsto \mathbb{R}_{\geq 0}$ is of class $\mathcal{KL}$ if it is of class $\mathcal{K}_\infty$ with respect to its first argument and, for any $s$, $\Psi(s, t) \xrightarrow[t \to \infty]{} 0$.
\end{definition}

Let us denote by $x_{0:k}(x_0, u_{0:k})$ the state trajectory of \eqref{eq:lstm:compact} when initialized in $x_{0}$ and fed with the input sequence $u_{0:k}$.
The (regional) $\delta$ISS considered in this paper can be hence defined as follows.
\smallskip

\begin{definition}[$\delta$ISS \cite{bayer2013discrete}]
	System \eqref{eq:lstm:compact} is Incrementally Input-to-State Stable ($\delta$ISS) with respect to the input set $\mathcal{U}$ and to the invariant set $\mathcal{X}$ if there exist functions $\beta$ of class $\mathcal{KL}$ and $\gamma$ of class $\mathcal{K}_\infty$ such that, for any pair of initial states $x_{a, 0} \in \mathcal{X}$ and $x_{b, 0} \in \mathcal{X}$, and any pair of input sequences $u_{a, 0:k} \in \mathcal{U}$ and $u_{b, 0:k} \in \mathcal{U}$, it holds that
	\begin{equation}
		\| x_{a, k} - x_{b, k} \|_2 \leq \beta(\| x_{a, 0} - x_{b, 0} \|_2, k) + \gamma(\| u_{a, 0:k} - u_{b, 0:k} \|_{2, \infty}), 
	\end{equation}
	where $x_{\alpha, k}$ is short for $x_{k}(x_{\alpha, 0}, u_{\alpha, 0:k})$. 
\end{definition}
\smallskip

Remarkably, this property implies that by initializing system \eqref{eq:lstm:compact} in two different initial states within $\mathcal{X}$, and feeding it with any two input sequences extracted from the set $\mathcal{U}$, one obtains state trajectories whose distance is asymptotically independent of the initial conditions.
Moreover, the closer are such input sequences, the smaller is the bound on the distance between the state trajectories.

The $\delta$ISS property is a useful tool to guarantee the safety and robustness of RNNs against input perturbations \cite{bonassi2022survey}.
Indeed, this property can be leveraged to compute a conservative bound of the model's output reachable set or, at least, to provide a theoretical ground for the numerical computation of such set.
In \cite{terzi2021lstm, bonassi2021nonlinear, bonassi2022offset}, the model's $\delta$ISS has also been shown to allow for the synthesis of nominally closed-loop stable predictive control laws.

The goal of this paper is to analyze the $\delta$ISS properties of deep LSTMs, so as to pave the way for the design of predictive control laws based on these models guaranteeing stability of the closed-loop system. 
In particular, we here show that Propositions \ref{prop:gates_bounds} and \ref{prop:invariant_set} imply that, if the $\delta$ISS sufficient condition devised in Proposition 2 of \cite{terzi2021lstm} is satisfied layer-wise, then the deep LSTM is $\delta$ISS.
\smallskip

\begin{theorem} \label{theorem:deltaiss}
	A sufficient condition for the $\delta$ISS of the deep LSTM \eqref{eq:lstm:compact} with respect to the sets $\mathcal{X}$ and $\mathcal{U}$ is that, for each layer $l \in \{ 1, ..., L \}$, 
	\begin{subequations} \label{eq:stability:conditions}
	\begin{align}
		\scalemath{0.9}{
		\bar{\sigma}_f^{(l)} + \bar{\sigma}_z^{(l)} \bar{\alpha}^{(l)} + \frac{1}{4} \bar{h}^{(l)} \| U_z^{(l)} \|_2 -  \frac{1}{4} \bar{\sigma}_f^{(l)}  \bar{h}^{(l)} \| U_z^{(l)} \|_2 - 1} & \scalemath{0.9}{< 0}, \\
		\scalemath{0.9}{\frac{1}{4} \bar{\sigma}_f^{(l)}  \bar{h}^{(l)} \| U_z^{(l)} \|_2 - 1} &\scalemath{0.9}{< 0},
	\end{align}
	\end{subequations}
	where $\bar{\alpha}^{(l)}$ is defined as
	\begin{equation}
		\bar{\alpha}^{(l)} = \frac{1}{4} \bar{c}^{(l)} \| U_f^{(l)} \|_2 + \bar{\sigma}_i^{(l)} \| U_r^{(l)} \|_2 + \frac{1}{4} \bar{\phi}_r^{(l)} \| U_i^{(l)} \|_2, 
	\end{equation}
	$\bar{\sigma}_f^{(l)}$, $\bar{\sigma}_i^{(l)}$, $\bar{\sigma}_z^{(l)}$, and $\bar{\phi}_r^{(l)}$ are defined as in \eqref{eq:lstm:gates_bounds}, and $\bar{c}^{(l)}$ and $\bar{h}^{(l)}$ as in \eqref{eq:lstm:invset}.
\end{theorem}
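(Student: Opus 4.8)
The plan is to reduce the $\delta$ISS of the deep LSTM to the $\delta$ISS of each individual layer, and then invoke the shallow-LSTM result (Proposition 2 of \cite{terzi2021lstm}) layer-wise. First I would observe that, thanks to Proposition \ref{prop:invariant_set}, the set $\mathcal{X}$ is invariant, so once a trajectory starts in $\mathcal{X}$ it remains there; in particular each layer's state $x_k^{(l)}$ stays in $\mathcal{X}^{(l)}$ and, crucially, each layer's \emph{input} $u_k^{(l)}$ is bounded: for $l=1$ it lies in $\mathcal{U}$ by Assumption \ref{ass:input}, while for $l \geq 2$ it equals $h_{k+1}^{(l-1)}$, which by Proposition \ref{prop:invariant_set} satisfies $\| h_{k+1}^{(l-1)} \|_\infty \leq \bar{h}^{(l-1)} < 1$. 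Hence each layer, viewed in isolation, is a shallow LSTM whose input is confined to a unity-bounded set, so the conditions \eqref{eq:stability:conditions} — which are exactly the shallow conditions of \cite{terzi2021lstm} applied to layer $l$ — guarantee that layer $l$ is $\delta$ISS with respect to its own input and state sets. This yields, for each $l$, class-$\mathcal{KL}$ and class-$\mathcal{K}_\infty$ functions $\beta^{(l)}, \gamma^{(l)}$ such that
\begin{equation*}
	\| x_{a,k}^{(l)} - x_{b,k}^{(l)} \|_2 \leq \beta^{(l)}\big( \| x_{a,0}^{(l)} - x_{b,0}^{(l)} \|_2, k \big) + \gamma^{(l)}\big( \| u_{a,0:k}^{(l)} - u_{b,0:k}^{(l)} \|_{2,\infty} \big).
\end{equation*}

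Next I would propagate these bounds through the layer cascade. Since $u_k^{(l)} = h_{k+1}^{(l-1)}$ for $l \geq 2$, the input discrepancy of layer $l$ is controlled by the state discrepancy of layer $l-1$: $\| u_{a,k}^{(l)} - u_{b,k}^{(l)} \|_2 = \| h_{a,k+1}^{(l-1)} - h_{b,k+1}^{(l-1)} \|_2 \leq \| x_{a,k+1}^{(l-1)} - x_{b,k+1}^{(l-1)} \|_2$, and taking the $\ell_{2,\infty}$ norm over $0:k$ bounds it by $\sup_{j \leq k+1}\| x_{a,j}^{(l-1)} - x_{b,j}^{(l-1)} \|_2$, i.e. by the right-hand side of the layer-$(l-1)$ bound evaluated at time $0$ in the $\mathcal{KL}$ term (which dominates over all $j$, since $\beta^{(l-1)}(\cdot,j)$ is nonincreasing in $j$) plus the layer-$(l-1)$ input term. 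Substituting recursively from $l=1$ up to $l=L$, and using that compositions and sums of $\mathcal{K}_\infty$ functions are $\mathcal{K}_\infty$ and that $\mathcal{KL}$ functions composed/summed appropriately remain $\mathcal{KL}$, one obtains a bound on $\| x_{a,k}^{(L)} - x_{b,k}^{(L)} \|_2$ — and similarly on every $\| x_{a,k}^{(l)} - x_{b,k}^{(l)} \|_2$ — of the form $\tilde\beta^{(l)}(\|x_{a,0}-x_{b,0}\|_2, k) + \tilde\gamma^{(l)}(\|u_{a,0:k}-u_{b,0:k}\|_{2,\infty})$, where the initial-state argument can be taken as the full $\|x_{a,0}-x_{b,0}\|_2 \geq \|x_{a,0}^{(l)}-x_{b,0}^{(l)}\|_2$. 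Finally, since $\|x_{a,k}-x_{b,k}\|_2 \leq \sum_{l=1}^L \|x_{a,k}^{(l)}-x_{b,k}^{(l)}\|_2$, summing the per-layer bounds and setting $\beta = \sum_l \tilde\beta^{(l)}$, $\gamma = \sum_l \tilde\gamma^{(l)}$ produces the required $\delta$ISS estimate for the whole deep network.

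The main obstacle I expect is the bookkeeping in the recursive propagation: one must be careful that the $\mathcal{KL}$ "memory" term at layer $l$ absorbs contributions coming from the transient of \emph{all} preceding layers without the time-decay being destroyed. The clean way to handle this is to note that for any $\mathcal{KL}$ function $\beta^{(l-1)}$ and any horizon, $\sup_{j}\beta^{(l-1)}(s,j) = \beta^{(l-1)}(s,0)$, which is merely a $\mathcal{K}_\infty$ function of $s$; feeding a $\mathcal{K}_\infty$ function of the previous state-discrepancy bound into $\gamma^{(l)}$ keeps the overall input-gain term $\mathcal{K}_\infty$, while the genuine time-decaying part of the layer-$l$ bound — $\beta^{(l)}$ acting on layer-$l$'s own initial discrepancy — survives intact and, composed with the decaying parts inherited from below (each also a $\mathcal{KL}$ function of the common initial discrepancy $\|x_{a,0}-x_{b,0}\|_2$), can be dominated by a single $\mathcal{KL}$ function via standard $\mathcal{KL}$-manipulation lemmas. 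Alternatively, and perhaps more transparently for a conference-length argument, one can bypass the explicit $\mathcal{KL}/\mathcal{K}_\infty$ gymnastics by recalling that the shallow $\delta$ISS proof in \cite{terzi2021lstm} actually establishes exponential incremental stability (a contraction-type estimate with a geometric $\mathcal{KL}$ term $\beta^{(l)}(s,k) = \lambda_l^k \, \rho_l \, s$ for some $\lambda_l \in (0,1)$); the cascade of finitely many exponentially incrementally stable subsystems with a triangular interconnection is again exponentially incrementally stable, which follows from a routine induction on $L$ and directly yields the $\mathcal{KL}$ and $\mathcal{K}_\infty$ functions in closed form. I would present the argument along these lines, deferring the detailed constant-tracking to an appendix.
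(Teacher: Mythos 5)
Your second, ``exponential'' route is essentially the paper's proof. The paper does not invoke abstract $\mathcal{KL}/\mathcal{K}_\infty$ cascade arguments at all: it uses the fact that, under \eqref{eq:stability:conditions} and the unity-boundedness of each layer's input, the shallow result of \cite{terzi2021lstm} gives for every layer an explicit \emph{linear} incremental bound
\begin{equation*}
\begin{bmatrix} \| c_{a,k+1}^{(l)} - c_{b,k+1}^{(l)} \|_2 \\ \| h_{a,k+1}^{(l)} - h_{b,k+1}^{(l)} \|_2 \end{bmatrix}
\leq \mathfrak{A}^{(l)} \begin{bmatrix} \| c_{a,k}^{(l)} - c_{b,k}^{(l)} \|_2 \\ \| h_{a,k}^{(l)} - h_{b,k}^{(l)} \|_2 \end{bmatrix} + \mathfrak{B}^{(l)} \| u_{a,k}^{(l)} - u_{b,k}^{(l)} \|_2,
\end{equation*}
with $\mathfrak{A}^{(l)}$ Schur stable; substituting $u_k^{(l)} = h_{k+1}^{(l-1)}$ yields one block lower-triangular inequality for the whole stack, whose matrix $\mathfrak{A}$ is Schur stable because its diagonal blocks are, and iterating it gives the $\mu\lambda^k$ term plus the gain $\| (I - \mathfrak{A})^{-1}\mathfrak{B} \|_2$. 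Your induction on the triangular cascade of exponentially incrementally stable layers is the same argument in different clothing, so on that route the proposal is sound.

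Your first route, however, contains a genuine gap as written. When you bound $\| u_{a,0:k}^{(l)} - u_{b,0:k}^{(l)} \|_{2,\infty}$ by $\sup_{j\le k+1}\| x_{a,j}^{(l-1)} - x_{b,j}^{(l-1)} \|_2$ and replace the $\mathcal{KL}$ part of the layer-$(l-1)$ estimate by $\beta^{(l-1)}(\| x_{a,0}-x_{b,0}\|_2, 0)$, the resulting term enters the layer-$l$ bound through $\gamma^{(l)}(\cdot)$ as a quantity that depends on the initial state discrepancy but \emph{does not decay in time}; for fixed nonzero initial discrepancy it stays bounded away from zero as $k \to \infty$, so no $\mathcal{KL}$ function can dominate it, and ``standard $\mathcal{KL}$-manipulation lemmas'' will not rescue the step. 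The correct generic-gain argument for cascades requires something extra, e.g.\ splitting the horizon at $k/2$ and exploiting that the layer-$(l-1)$ discrepancy over the second half is bounded by $\beta^{(l-1)}(\cdot, k/2)+\gamma^{(l-1)}(\cdot)$, or simply working with the exponential estimates as above. Since you flag this obstacle yourself and fall back on the exponential route, the overall plan reaches the theorem, but the ``clean way'' you describe for the first route is not actually valid and should be replaced by one of these fixes.
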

\begin{proof}
	See Appendix \ref{proof:deltaiss}.
\end{proof}
\smallskip

Note that the sufficient condition devised in Theorem \ref{theorem:deltaiss} correspond to $2 L$ nonlinear inequalities on the weights of the deep LSTM model.
For the sake of compactness, this set of inequalities is denoted as 
\begin{equation} \label{eq:stability:nu}
	\nu(\Phi) < 0.
\end{equation}
Note that this result is consistent with \cite{angeli2002lyapunov}, where the $\delta$ISS of cascaded continuous-time $\delta$ISS systems has been investigated. 
Condition~\eqref{eq:stability:nu} can be used in two alternative ways: (\emph{i}) assessing a-posteriori whether the $\delta$ISS of a trained model can be guaranteed, (\emph{ii}) enforcing, during the training procedure, the satisfaction of $\nu(\Phi) < 0$, and hence the $\delta$ISS of the network being trained. 
Based on this idea, in Appendix \ref{appendix:training} a procedure for training provenly-$\delta$ISS deep LSTM models is detailed.

\section{Numerical example}\label{sec:example}
\begin{figure}[t]
	\centering
	\includegraphics[width=0.9 \columnwidth]{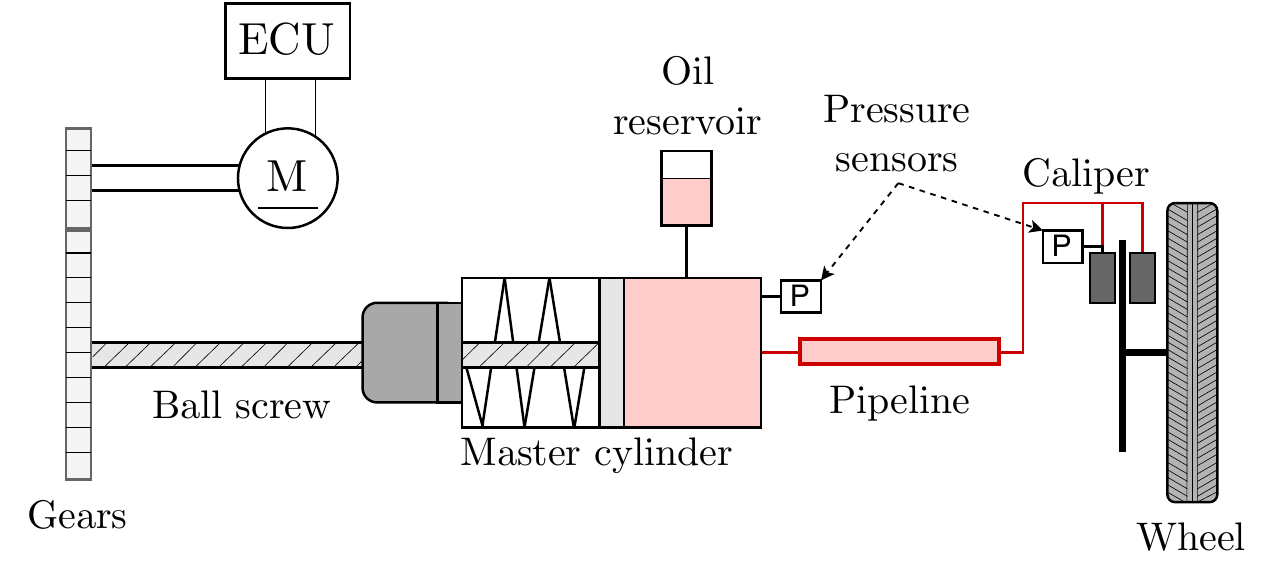}
	\includegraphics[width=0.75 \columnwidth]{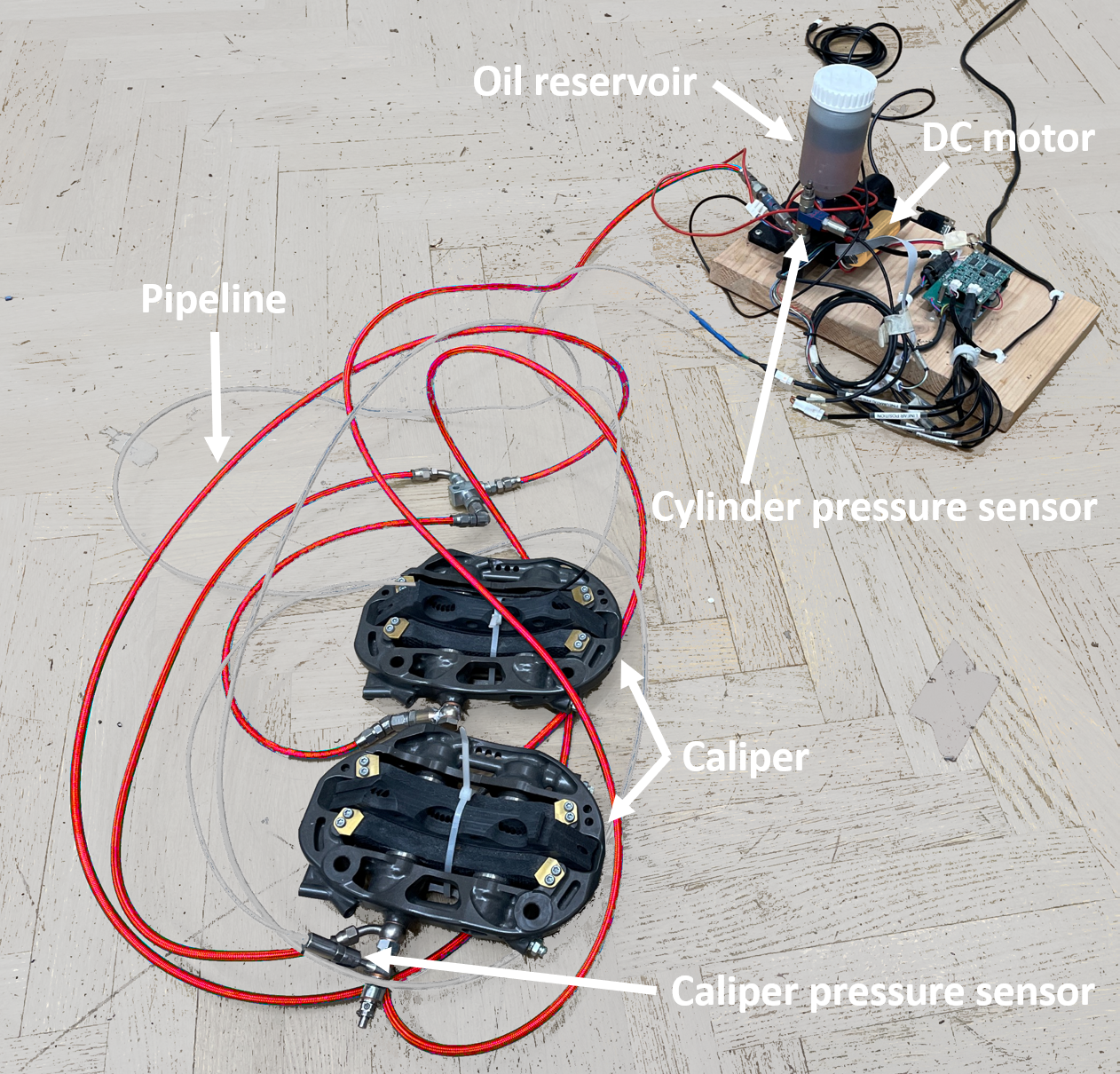}
	\vspace{-2mm}
	\caption{Schematic of the brake-by-wire prototype apparatus.}
	\label{fig:example:bbw}
\end{figure}
The modeling performances of the proposed approach are tested against the data of a brake-by-wire (BBW) actuator.
Brake-by-wire actuators represent an application of the drive-by-wire paradigm to the braking component of a vehicle: by physically decoupling the driver command with respect to the actual braking pressure, they are effectively employed by high-level vehicle system controls, such as Anti-lock Braking Systems (ABS) and Electronic Stability Programs (ESP). 

The brake-by-wire actuator is shown in Figure~\ref{fig:example:bbw}, along with its schematic representation.
The system features a DC motor coupled with a ball screw mechanism, which converts the motor rotary motion into the master cylinder linear one. By changing the master cylinder position the pressure in the braking calipers is built; eventually, pressing the braking pads onto the braking disk, it yields the desired braking torque.
The ultimate goal of a BBW control system is the precise tracking of a reference braking pressure. An effective, widely employed control strategy, features two nested loops \cite{todeschini2014adaptive}: the inner one controls the master cylinder position $x_p$, by acting on the DC motor current $i_{m}$; the outer one controls the circuit braking pressure, by acting on the internal loop position reference. Despite the variable of interest is the caliper pressure $P_{cal}$,  to simplify the overall system layout the master cylinder pressure $P_{cyl}$ is often used as feedback.

The peculiarity of the considered setup is the very long pipeline, highlighted with the color red in Figure~\ref{fig:example:bbw}, which causes non-negligible dynamics differences between the master cylinder and the caliper pressure, \cite{corno2015BBWmodeling}. This motivates, for control design and virtual sensing purposes, the development of a model capable of correctly describing the relationship between the master cylinder position $x_p$, the pressure in the cylinder $P_{cyl}$, and the pressure at the caliper $P_{cal}$. We here propose to learn such model by means of a $\delta$ISS deep LSTM network. The choice of using a non-linear dynamic model like the LSTM one, well fits with the BBW features, i.e. the presence of non-linearities and its Single-Input Multi-Output (SIMO) nature.

\subsection{Experiment campaign and data collection}
To collect the necessary data and complete the model training procedure, $6$ open-loop experiments have been conducted on the apparatus.
In each experiment, lasting $180$ seconds, a multilevel pseudo-random binary signal has been generated for the motor's current reference, which is then actuated by the apparatus' ECU.
Thus, the current reference is a train of steps of random amplitude (in the range $0 \text{A}$ to $10 \text{A}$) and random time duration (between $0.5 \text{s}$ and $1.5 \text{s}$).
For each experiment, these random variables have been drawn from different piece-wise uniform probability distributions. 
In this way, the informativeness of the collected data is enhanced.

During the experiments, $x_p$, $P_{cyl}$, and $P_{cal}$ have been measured with a sampling frequency of $200 \text{Hz}$. 
Each experiment thus consists of $36000$ $(x_p, P_{cyl}, P_{cal})$ datapoints.

The experiments have then been partitioned into training, validation, and test datasets, according to the details provided in Appendix \ref{appendix:example}.

\subsection{Deep LSTM training}
The training procedure has been carried out with PyTorch~1.12 running on Python 3.10.
A deep LSTM with $L = 2$ layers, with $n_c^{(1)} = n_c^{(2)} = 8$ units, has been adopted.
Note that this model has one input (the piston position $x_p$) and two outputs (the pressures in the caliper and in the master cylinder, i.e., $P_{cal}$ and $P_{cyl}$, respectively).

\begin{figure}[t]
	\centering
	\includegraphics[width=\columnwidth]{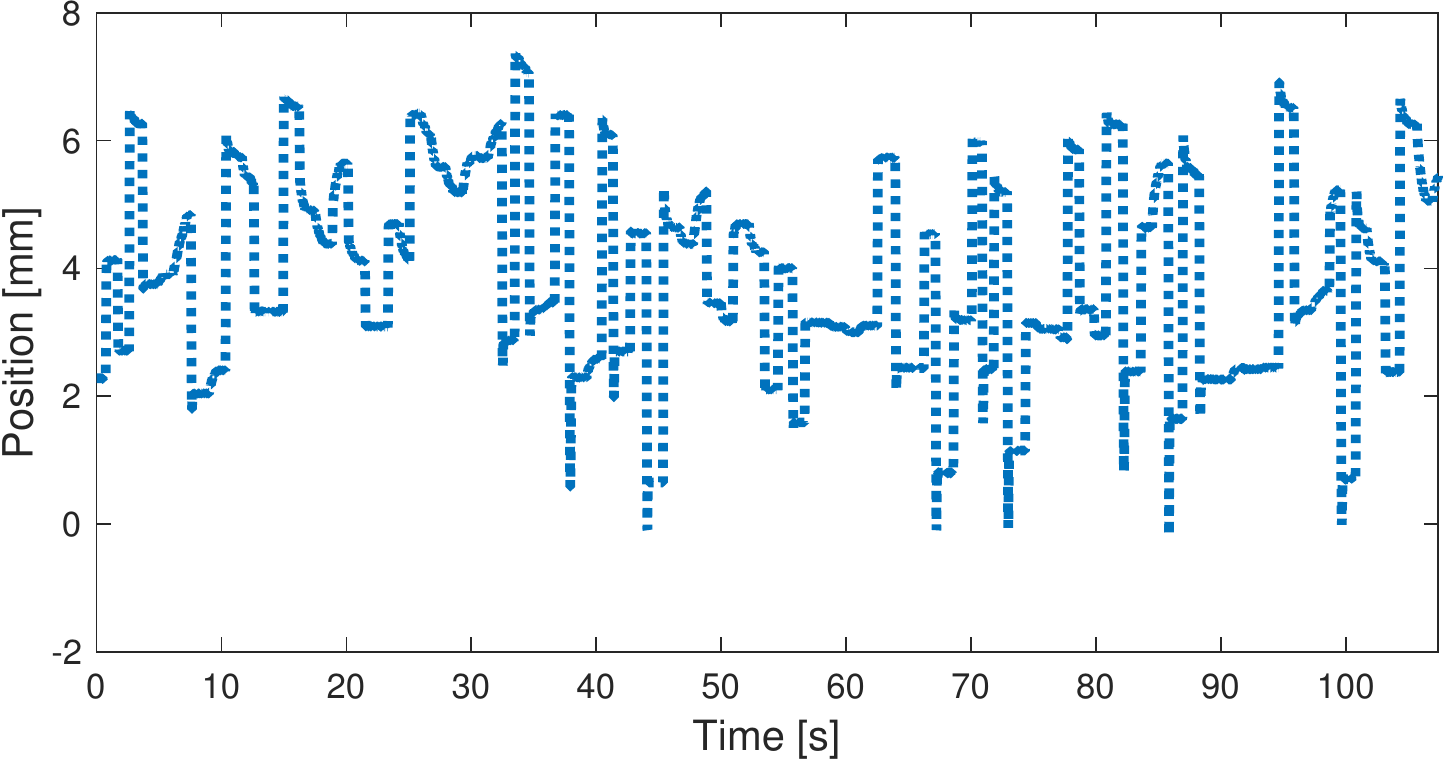} 
	\caption{Performance testing: input sequence (i.e., the piston position) applied to the deep LSTM model.}
	\label{fig:example:test_input}
\end{figure}

The model has been trained by minimizing the simulation MSE while fulfilling conditions \eqref{eq:stability:conditions}, following the procedure described in Appendix \ref{appendix:training}, whereas in Appendix \ref{appendix:example} the implementation of the training procedure is detailed.
The resulting modeling performances can be appreciated in the following Figures. In particular, Figure \ref{fig:example:Y_test_LSTM_vs_linear} shows an overview of the test data along with the simulated trajectories, using the test input of Figure \ref{fig:example:test_input}, of the learned LSTM model and those of a linear model. The linear model has been identified using a classic subspace identification method, that suits with the SIMO nature of the system.
\begin{figure}[t]
	\centering
	\includegraphics[width= \columnwidth]{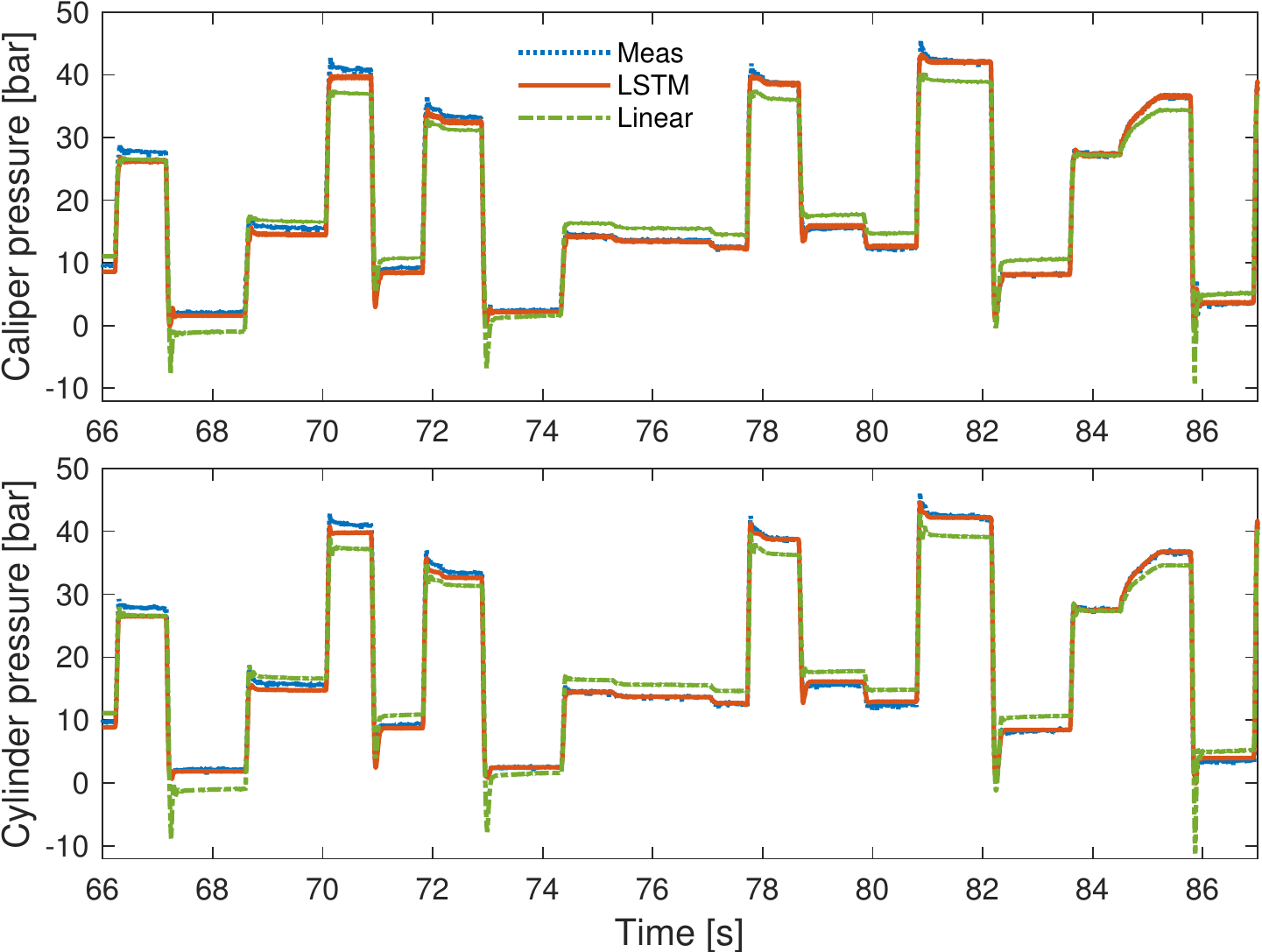} \\
	\caption{Performances testing: open-loop prediction of the trained deep LSTM model (red solid line) compared to a linear model (green dash-dotted line) and the ground truth (blue dotted line).}
	\label{fig:example:Y_test_LSTM_vs_linear}
\end{figure}
The effectiveness of the LSTM model is, indeed, manifest: the linear model, despite an overall agreement with both pressures, features larger steady-state errors and undershoots, which even cause unrealistic negative pressure values. In order to quantify the LSTM performance the FIT index is used see Appendix~\ref{appendix:training}, resulting in $\textrm{FIT} = 88.6 \%$, indicating good modeling performances.

Finally, in order to appreciate the network capability of precisely modeling also the dynamics which characterize the cylinder and caliper pressures, Figure \ref{fig:example:Y_test_dynamic} shows a close up of the test data during a highly dynamic transient: the learned model is capable of correctly describing the system cause-effect relationship, i.e. the cylinder pressure increases before the caliper one; moreover it well describes the respective dynamics, in particular the less damped behavior of the caliper pressure compared to the master cylinder one.
\begin{figure}[t]
	\centering
	\includegraphics[width=\columnwidth]{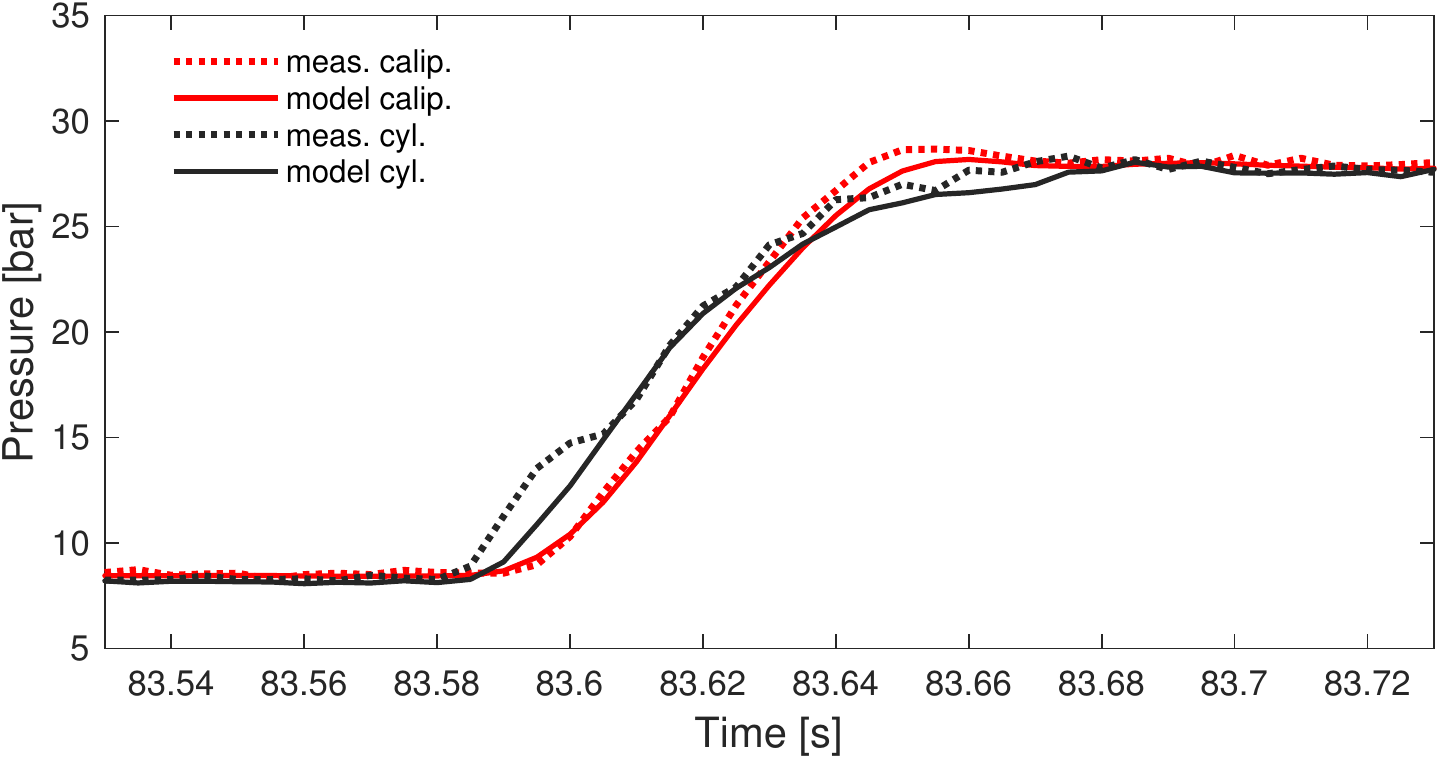} \\
	\caption{Performances testing: LSTM modeling detail during a highly dynamic transient.}
	\label{fig:example:Y_test_dynamic}
\end{figure}

\section{Conclusions}
In this paper, the use of deep Long Short-Term Memory networks (LSTMs) for the identification of nonlinear dynamical systems has been investigated.
Sufficient conditions guaranteeing their Incremental Input-to-State Stability ($\delta$ISS) were devised.
These conditions consist of a number of nonlinear inequalities on the weights of the network, and can be enforced during the training procedure.
 
The proposed approach was tested to identify a brake-by-wire system from data collected experimentally from the apparatus, yielding more than satisfactory results.
Future research directions will concern the use of the identified network for the synthesis of a computationally efficient control system for the benchmark system under analysis, e.g., the Internal Model Control strategy proposed in \cite{bonassi2022imc}.

\appendices
\section{Proofs} \label{appendix:proofs}
\subsection{Proof of Proposition \ref{prop:gates_bounds}} \label{proof:gates_bounds}

First, let us notice that, for any layer $l \in \{ 1, ..., L \}$, it holds that $h_k^{(l)}$ is unity-bounded, i.e., $\| h_k^{(l)} \|_\infty \leq 1$, since it is defined as the element-wise product between two subunitary vectors, see \eqref{eq:lstm:model:statespace}.
Owing to Assumption \ref{ass:input}, and in view of the definition of the input vector \eqref{eq:lstm:model:input}, it holds that $\| u_{k}^{(l)} \|_\infty \leq 1$. 

The bounds \eqref{eq:lstm:gates_bounds} can be therefore obtained by applying layer-wise the arguments devised in \cite[Section~2.2]{terzi2021lstm}. 
\hfill\QED

\subsection{Proof of Proposition \ref{prop:invariant_set}} \label{proof:invariant_set}
Since, for each layer $l \in \{ 1, ..., L \}$, the input vector $u_k^{(l)}$ is subunitary, by applying the results obtained in \cite[Section~2.2]{terzi2021lstm} layer-wise, then~\eqref{eq:lstm:model:statespace} admits the invariant set $\mathcal{X}^{(l)}$ reported in \eqref{eq:lstm:invset}. 
Thus, \eqref{eq:lstm:invset:full} is an invariant set of the deep LSTM model.
\hfill\QED

\subsection{Proof of Theorem \ref{theorem:deltaiss}} \label{proof:deltaiss}
Consider a generic layer $l \in \{ 1, ..., L \}$. 
Owing to the unity-boundedness of its input vector, the satisfaction of conditions \eqref{eq:stability:conditions} implies that the layer is $\delta$ISS by \cite[Proposition~2]{terzi2021lstm}.
That is, considering the pair of initial states $x_{a, 0} \in \mathcal{X}$ and $x_{b, 0} \in \mathcal{X}$, and the pair of input sequences $u_{a, 0:k} \in \mathcal{U}_{0:k}$ and $u_{b, 0:k} \in \mathcal{U}_{0:k}$, and letting $x_{\alpha, k}$ be short for $x_k(x_{\alpha, 0}, u_{\alpha, 0:k})$, it holds that
\begin{equation}
\begin{aligned}
	\begin{bmatrix}
		\| c_{a, k+1}^{(l)} - c_{b, k+1}^{(l)} \|_2 \\ 
		\| h_{a, k+1}^{(l)} - h_{b, k+1}^{(l)} \|_2   
	\end{bmatrix} \leq& \mathfrak{A}^{(l)} \begin{bmatrix}
		\| c_{a, k}^{(l)} - c_{b, k}^{(l)} \|_2 \\ 
		\| h_{a, k}^{(l)} - h_{b, k}^{(l)} \|_2   
	\end{bmatrix} \\
	&+ \mathfrak{B}^{(l)} \| u_{a, k}^{(l)} - u_{b, k}^{(l)} \|_2,
\end{aligned}
\end{equation}
where $\mathfrak{A}^{(l)}$ a Schur stable 2-by-2 matrix, see \cite{terzi2021lstm}, defined as 
\begin{equation}
	\mathfrak{A}^{(l)} = \begin{bmatrix}
		\bar{\sigma}_f^{(l)} & \bar{\alpha}^{(l)} \\
		\bar{\sigma}_z^{(l)} \bar{\sigma}_f^{(l)} & \bar{\sigma}_z^{(l)} \bar{\alpha}^{(l)} + \frac{1}{4} \bar{h} \| U_z^{(l)} \|_2
	\end{bmatrix}.
\end{equation}
Recalling the definition of $u_{\alpha, k}^{(l)}$ given in \eqref{eq:lstm:model:input}, one obtains that
\begin{subequations} \label{eq:proof:deltaiss:ineq_k}
\begin{equation}
\begin{aligned}
	\begin{bmatrix}
		\| c_{a, k+1}^{(1)} - c_{b, k+1}^{(1)} \|_2 \\ 
		\| h_{a, k+1}^{(1)} - h_{b, k+1}^{(1)} \|_2  \\
		\vdots \\
		\| c_{a, k+1}^{(L)} - c_{b, k+1}^{(L)} \|_2 \\ 
		\| h_{a, k+1}^{(L)} - h_{b, k+1}^{(L)} \|_2  \\
	\end{bmatrix} \leq& \mathfrak{A}
	\begin{bmatrix}
		\| c_{a, k+1}^{(1)} - c_{b, k+1}^{(1)} \|_2 \\ 
		\| h_{a, k+1}^{(1)} - h_{b, k+1}^{(1)} \|_2  \\
		\vdots \\
		\| c_{a, k+1}^{(L)} - c_{b, k+1}^{(L)} \|_2 \\ 
		\| h_{a, k+1}^{(L)} - h_{b, k+1}^{(L)} \|_2  \\
	\end{bmatrix}  \\
	&+ \mathfrak{B} \| u_{a, k} - u_{b, k} \|_2,
\end{aligned}
\end{equation}
where
\begin{equation}
\scalemath{0.8}{
	\mathfrak{A} = \begin{bmatrix}
		\mathfrak{A}^{(1)} & 0_{2, 2} & ... & 0_{2, 2} \\
		\mathfrak{B}^{(2)} \tilde{\mathfrak{A}}^{(1)} & \mathfrak{A}^{(2)} & ... & 0_{2, 2} \\
		\vdots && \ddots & \vdots \\
		\mathfrak{B}^{(L)}  \big( \prod_{l= L - 1}^{2}  \tilde{\mathfrak{A}}^{(l)} \big) {\mathfrak{A}}^{(1)}  & \mathfrak{B}^{(L)} \big( \prod_{l= L - 1}^{3}   \tilde{\mathfrak{A}}^{(l)} \big) {\mathfrak{A}}^{(2)}  & ... & \mathfrak{A}^{(L)}
	\end{bmatrix},}
\end{equation}
and 
\begin{equation}
\scalemath{0.8}{
	\mathfrak{B} = \begin{bmatrix}
		 \mathfrak{B}^{(1)} \\
		  \mathfrak{B}^{(2)} \tilde{\mathfrak{B}}^{(1)} \\
		  \vdots \\
		  \mathfrak{B}^{(L)} \big( \prod_{l=L-1}^{1} \tilde{\mathfrak{B}}^{(l)} \big)
	\end{bmatrix},}
\end{equation}
\end{subequations}
while the matrices $\tilde{\mathfrak{A}}^{(l)}$ and $\tilde{\mathfrak{B}}^{(l)}$ are defined as $\tilde{\mathfrak{A}}^{(l)} = [ 0, 1 ] \mathfrak{A}^{(l)}$ and $\tilde{\mathfrak{B}}^{(l)} = [ 0, 1 ] \mathfrak{B}^{(l)}$.
By iterating \eqref{eq:proof:deltaiss:ineq_k}, one gets
\begin{equation}
\begin{aligned}
	\begin{bmatrix}
		\| c_{a, k}^{(1)} - c_{b, k}^{(1)} \|_2 \\ 
		\| h_{a, k}^{(1)} - h_{b, k}^{(1)} \|_2  \\
		\vdots \\
		\| c_{a, k}^{(L)} - c_{b, k}^{(L)} \|_2 \\ 
		\| h_{a, k}^{(L)} - h_{b, k}^{(L)} \|_2  \\
	\end{bmatrix} \leq& \mathfrak{A}^k \begin{bmatrix}
		\| c_{a, 0}^{(1)} - c_{b, 0}^{(1)} \|_2 \\ 
		\| h_{a, 0}^{(1)} - h_{b, 0}^{(1)} \|_2  \\
		\vdots \\
		\| c_{a, 0}^{(L)} - c_{b, 0}^{(L)} \|_2 \\ 
		\| h_{a, 0}^{(L)} - h_{b, 0}^{(L)} \|_2  \\
	\end{bmatrix}  \\
	& + \sum_{\tau = 0}^{k-1} \mathfrak{A}^{k - \tau -1} \mathfrak{B} \| u_{a, k} - u_{b, k} \|_2
\end{aligned}
\end{equation}
Let us point out that
\begin{equation}
\scalemath{0.9}{
	\left\|
	\begin{bmatrix}
		\| c_{a, k}^{(1)} - c_{b, k}^{(1)} \|_2 \\ 
		\| h_{a, k}^{(1)} - h_{b, k}^{(1)} \|_2  \\
		\vdots \\
		\| c_{a, k}^{(L)} - c_{b, k}^{(L)} \|_2 \\ 
		\| h_{a, k}^{(L)} - h_{b, k}^{(L)} \|_2  \\
	\end{bmatrix} \right\|_2 = \left\| \begin{bmatrix}
		\| x_{a, k}^{(1)} - x_{b, k}^{(1)} \|_2 \\
		\vdots \\
		\| x_{a, k}^{(L)} - x_{b, k}^{(L)} \|_2  \\
	\end{bmatrix} \right\|_2 = \| x_{a, k} - x_{b, k} \|_2.}
\end{equation}
Moreover, the matrix $\mathfrak{A}$ is Schur stable, since it is a block-diagonal matrix with Schur stable blocks on the main diagonal. 
Thus, there exist $\lambda \in (0, 1)$ and $\mu > 0$ such that
\begin{equation}
\scalemath{0.95}{
\begin{aligned}
	\| x_{a, k} - x_{b, k} \|_2 \leq& \mu \lambda^k \| x_{a, 0} - x_{b, 0} \|_2 \\
	&+ \| (I_{2L, 2L} - \mathfrak{A})^{-1} \mathfrak{B} \|_2 \| u_{a, 0:k} - u_{b, 0:k} \|_{2, \infty},
\end{aligned}}
\end{equation}
i.e., the deep LSTM is $\delta$ISS.
\hfill \QED

\section{Training procedure} \label{appendix:training}
In this Appendix, the algorithm here adopted to train the deep LSTM model \eqref{eq:lstm:compact}, based on the so-called Truncated BackPropagation Through Time (TBPTT) \cite{bianchi2017recurrent}, is briefly described. For more details, the interested reader is addressed to \cite{bonassi2023reconciling}.
This procedure represents an intuitive way to address the training problem, i.e., the problem of finding the weights $\Phi^\star$ that minimize the free-run simulation error over the training data.
In summary, TBPTT consists of extracting $N_s$ partially-overlapping subsequences from the available input-output sequences collected during the experiment campaign. 
These subsequences, denoted as $\big(u_{0:T_s}^{\{ i\}}, y_{0:T_s}^{\{ i\}} \big)$, are indexed by $i \in \mathcal{I} = \{1, ..., N_s \}$, and have fixed length $T_s$, which is a design choice.
The set of subsequences $\mathcal{I}$ is split in a training set $\mathcal{I}_{tr}$, used to effectively train the model, and a validation set $\mathcal{I}_{val}$, which is instead used to quantify the modeling performances of the network during its iterative training procedure.

The training procedure is carried out in an iterative and batched fashion.
At every iteration (also called \emph{epoch}), the set of training subsequences, i.e. $\mathcal{I}_{tr}$, is randomly partition in $B$ subsets $\mathcal{I}_{tr}^{\{ b\}}$, with $b \in \{ 1, ..., B \}$, called batches.
Then, for each batch $b$, the loss function
\begin{equation} \label{eq:training:loss}
	\begin{aligned}
		\mathcal{L}(\Phi) &= \textrm{MSE} \big(\mathcal{I}_{tr}^{\{b\}}; \Phi \big) + \rho(\nu(\Phi)),
	\end{aligned}
\end{equation}
is evaluated, its gradient with respect to the weights, i.e. $\nabla_{\Phi}(\mathcal{L}(\Phi))$, is symbolically computed, and a gradient descend step is performed.
In its simplest form, this boils down to updating the weights as $\Phi \gets \Phi - \gamma \nabla_{\Phi}(\mathcal{L}(\Phi))$,
where $\gamma$ is a sufficiently small learning rate, see \cite{goodfellow2016deep}. 
More advanced techniques, based on gradient's momentum, can also be employed to speed up the convergence of this procedure.

Note that the loss function \eqref{eq:training:loss} consists of two terms.
The former term is the Mean Square Error (MSE) scored by the deep LSTM's free run simulation on the batch $\mathcal{I}_{tr}^{\{b\}}$, i.e.,
\begin{equation}
	\textrm{MSE}\big(\mathcal{I}_{tr}^{\{b\}}; \Phi \big) =  \sum_{i \in \mathcal{I}_{tr}^{\{b\}}} \sum_{k = \tau_w + 1}^{T_s} \frac{\| y_{k}(x_0, u_{0:k}^{\{ i \}}) - y^{\{ i \}}_{k} \|_2^2}{\lvert \mathcal{I}_{tr}^{\{b\}} \lvert (T_s - \tau_w)},
\end{equation}  
where $\tau_w \geq 0$ is the washout period, that allows to accomodate for the initial transient due to the random initial state $x_0$, see \cite{bianchi2017recurrent}.

The second term in \eqref{eq:training:loss} is a suitably designed regularization term that enforces the satisfaction of the $\delta$ISS sufficient conditions \eqref{eq:stability:conditions}, see \cite{bonassi2022survey, bonassi2023reconciling}.
An example of such regularization function is a piecewise-linear function, i.e.,
\begin{equation} \label{eq:appendix:regularization}
	\rho(\nu) = \frac{1}{n_{\nu}} \sum_{i=1}^{n_{\nu}} \big[ \bar{\pi} \max([\nu]_i + \varepsilon_{\nu}, 0) + \ubar{\pi} \min([\nu]_i + \varepsilon_{\nu}, 0)\big],
\end{equation}
where $n_{\nu}$ is the number of elements of $\nu(\Phi)$, i.e. the number of inequalities, $\varepsilon_\nu > 0$ is the constraint clearance, while  $\bar{\pi} \gg \ubar{\pi} > 0$ represent the cost coefficients.
More advanced regularization functions, such as the generalized piecewise function \cite{bonassi2023reconciling}, can also be adopted. 
This iterative training procedure is eventually halted when the performances on validation set, i.e. the validation metrics $\textrm{MSE}(\mathcal{I}_{val}; \Phi)$, stop improving, yielding the trained weights, denoted by $\Phi^\star$.

Finally, the performances of the trained model need to be objectively quantified on the independent test set.
To this end, the $\textrm{FIT}$ performance index $[\%]$ can be adopted \cite{bonassi2022survey, bonassi2023reconciling}, which is defined as
\begin{equation}
    \textrm{FIT} = 100 \left( 1 - \sum_{k=\tau_w+1}^{T_{te}} \frac{\| y_k - y_{k}^{\{te\}
} \|_2}{\| y_{k}^{\{te\}} - \bar{y}^{\{te\}} \|_2} \right),
\end{equation}
where $y^{\{te\}}_{0:T_{te}}$ denotes the output sequence of the test dataset, $y_k = y_{k}(x_0, u_{0:k}^{\{te\}})$ denotes the output of the LSTM model \eqref{eq:lstm:compact}, and $\bar{y}^{\{te\}}$ indicates the average value of $y_{0:T_{te}}^{\{te\}}$.
Note that a $\textrm{FIT}$ index of $100 \%$ indicates ideal modeling performances.

\section{Learning a black-box model of the BBW apparatus} \label{appendix:example}

\begin{figure}[t]
	\centering
	\includegraphics[width=\columnwidth]{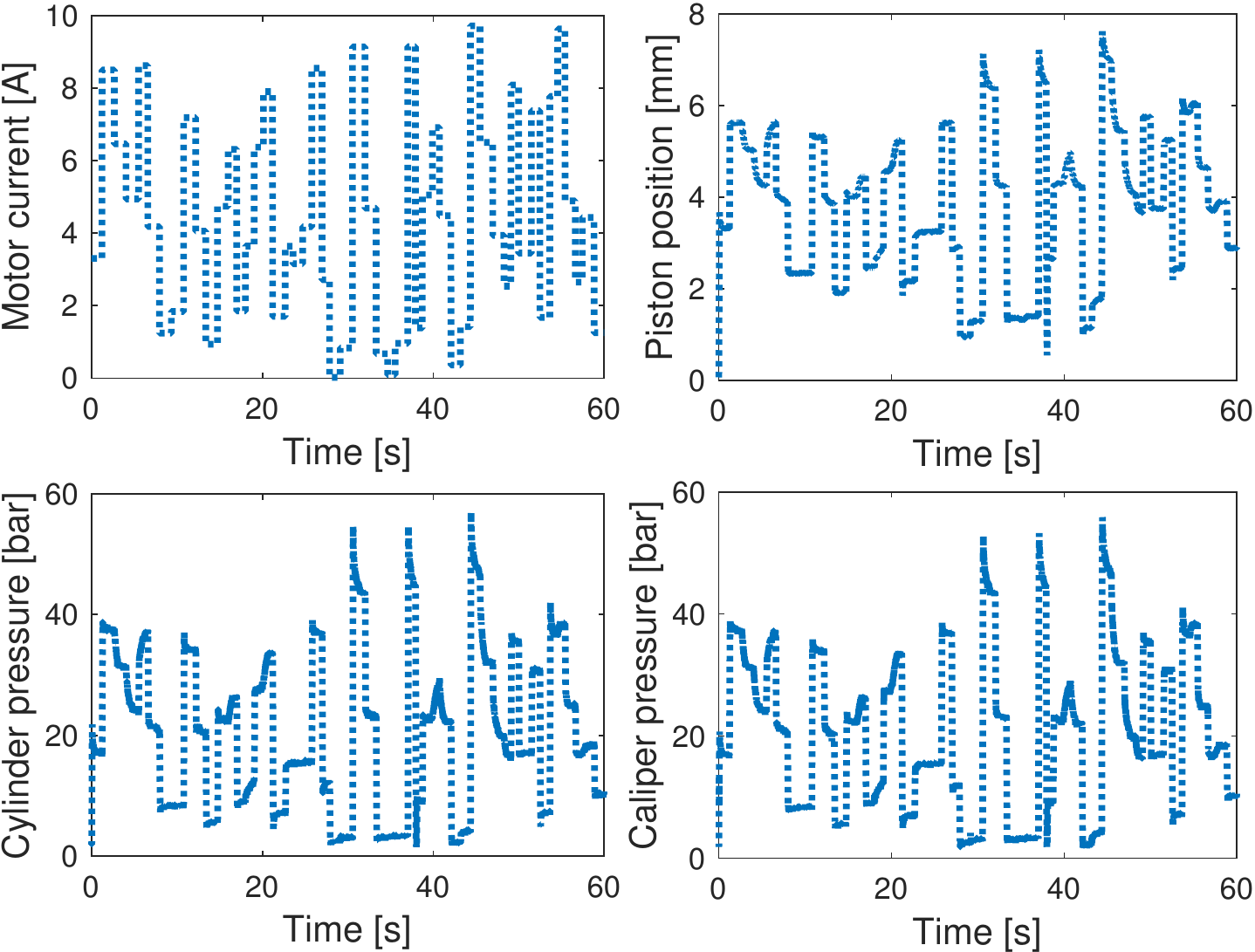}
	\caption{Portion of experiment $1$: applied current reference (top left), measured piston position (top right), measured pressure at the master cylinder (bottom left), and measured caliper pressure (bottom right).}
	\label{fig:example:data}
\end{figure}

As mentioned in Section IV, 6 open loop experiments have been conducted on the BBW apparatus: an example of the collected sequences is shown in Figure~\ref{fig:example:data}, where the first $60$ seconds of experiment $1$ are reported. 

After these sequences are collected, they are normalized \cite{bonassi2023reconciling}, and are split as follows: experiment $1$ has been partitioned into $40\%$ validation and $60 \%$ testing, whereas the remaining experiments into $25 \%$ validation and $75\%$ training.
Then, according to the TBPTT algorithm discussed in Appendix \ref{appendix:training}, $N_{tr}= 3500$ training subsequences and $N_{val} = 1000$ validation subsequences, of length $T_s = 200$, have been extracted from these datasets.
The former subsequences are collected in the training set $\mathcal{I}_{tr}$, the latter in the validation set $\mathcal{I}_{val}$.
Note that the test sequence, in contrast, is not partitioned into subsequences, since it is used to assess the modeling performance of the model at the end of the training procedure over a sufficiently long open-loop simulation.

The loss function is then designed as \eqref{eq:training:loss}, where the $\delta$ISS-enforcing term $\rho(\nu(\Phi))$ is defined as in \eqref{eq:appendix:regularization}, with $\bar{\pi} = 2 \cdot 10^{-4}$, $\ubar{\pi} = 10^{-6}$, and $\varepsilon_\nu = 0.02$. 

The training has been carried out using the RMSProp optimization algorithm \cite{goodfellow2016deep}, and batches of $50$ training subsequences, which implies that $B = 70$ batches need to be processed at each training epoch.
In Figure \ref{fig:example:training}, the evolution of the training loss function \eqref{eq:training:loss} and of the $\text{MSE}$ metric on the the validation dataset, i.e. $\text{MSE}(\mathcal{I}_{val}; \Phi)$, have been reported. 
The overall training procedure has been carried out for $694$ epochs, and has been halted when the validation metrics $\text{MSE}(\mathcal{I}_{val}; \Phi)$ stopped to improve, yielding the trained weights $\Phi^\star$.
The resulting deep LSTM model satisfies the $\delta$ISS sufficient conditions \eqref{eq:stability:conditions}, with $\max_i [\nu(\Phi^\star)]_i < -0.09 $.

\begin{figure}[t]
	\centering
	\includegraphics[width=\columnwidth]{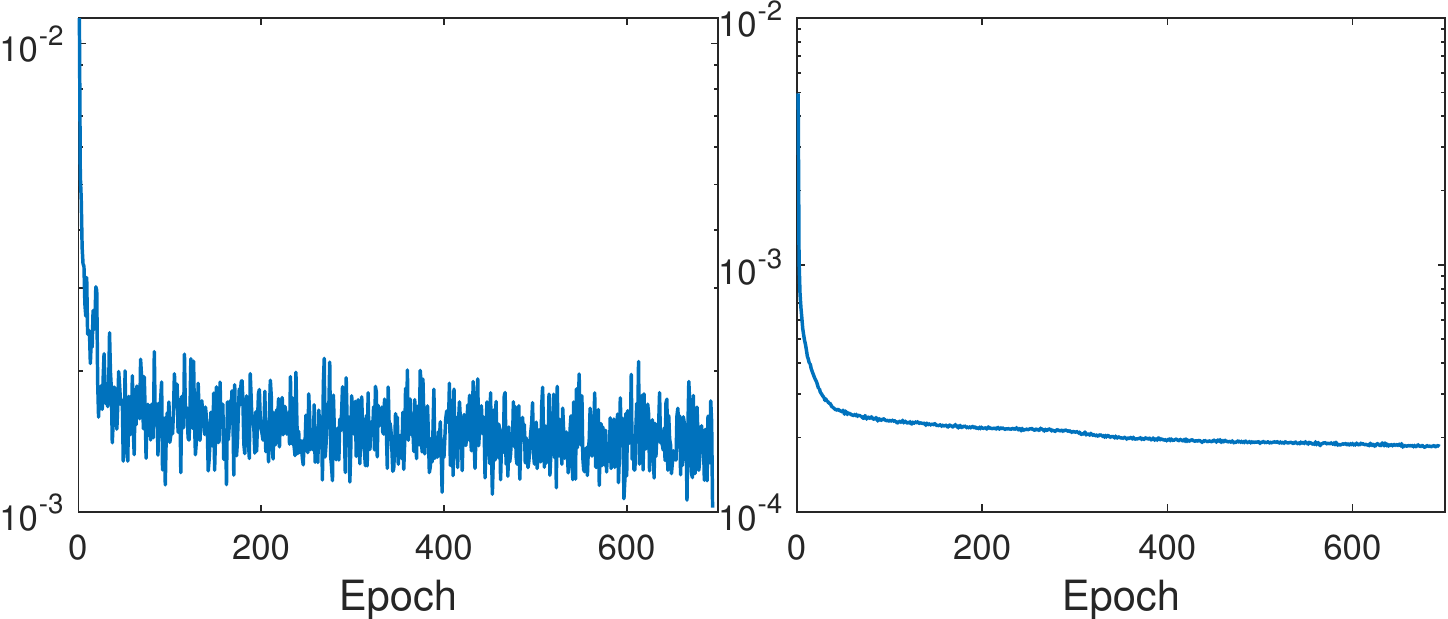}
	\caption{Evolution of the loss function (left) and of the $\text{MSE}$ over the validation dataset (right) throughout the training procedure.}
	\label{fig:example:training}
\end{figure}

\section*{Acknowledgments}
This project has received funding from the European Union’s Horizon 2020 research and innovation programme under the Marie Skłodowska-Curie grant No. 953348.

This work has been partially supported by the Italian Ministry for Research in the framework of the 2017 Program for Research Projects of National Interest (PRIN), Grant No.~2017YKXYXJ.

\bibliographystyle{IEEEtran}
\bibliography{Bibliografia}

% Generated by IEEEtran.bst, version: 1.14 (2015/08/26)
\begin{thebibliography}{10}
\providecommand{\url}[1]{#1}
\csname url@samestyle\endcsname
\providecommand{\newblock}{\relax}
\providecommand{\bibinfo}[2]{#2}
\providecommand{\BIBentrySTDinterwordspacing}{\spaceskip=0pt\relax}
\providecommand{\BIBentryALTinterwordstretchfactor}{4}
\providecommand{\BIBentryALTinterwordspacing}{\spaceskip=\fontdimen2\font plus
\BIBentryALTinterwordstretchfactor\fontdimen3\font minus \fontdimen4\font\relax}
\providecommand{\BIBforeignlanguage}[2]{{%
\expandafter\ifx\csname l@#1\endcsname\relax
\typeout{** WARNING: IEEEtran.bst: No hyphenation pattern has been}%
\typeout{** loaded for the language `#1'. Using the pattern for}%
\typeout{** the default language instead.}%
\else
\language=\csname l@#1\endcsname
\fi
#2}}
\providecommand{\BIBdecl}{\relax}
\BIBdecl

\bibitem{hunt1992neural}
K.~J. Hunt \emph{et~al.}, ``Neural networks for control systems -- a survey,'' \emph{Automatica}, vol.~28, no.~6, pp. 1083--1112, 1992.

\bibitem{goodfellow2016deep}
Y.~Bengio, I.~Goodfellow, and A.~Courville, \emph{Deep learning}.\hskip 1em plus 0.5em minus 0.4em\relax MIT press Massachusetts, USA, 2017, vol.~1.

\bibitem{bianchi2017recurrent}
F.~M. Bianchi, E.~Maiorino, M.~C. Kampffmeyer, A.~Rizzi, and R.~Jenssen, \emph{Recurrent neural networks for short-term load forecasting: an overview and comparative analysis}.\hskip 1em plus 0.5em minus 0.4em\relax Springer, 2017.

\bibitem{hochreiter1997long}
S.~Hochreiter and J.~Schmidhuber, ``{Long Short-Term Memory},'' \emph{Neural Computation}, vol.~9, no.~8, pp. 1735--1780, 11 1997.

\bibitem{cho2014learning}
K.~Cho \emph{et~al.}, ``Learning phrase representations using {RNN} encoder-decoder for statistical machine translation,'' in \emph{Conference on Empirical Methods in Natural Language Processing (EMNLP 2014)}, 2014.

\bibitem{pascanu2013difficulty}
R.~Pascanu, T.~Mikolov, and Y.~Bengio, ``On the difficulty of training recurrent neural networks,'' in \emph{International conference on machine learning}.\hskip 1em plus 0.5em minus 0.4em\relax PMLR, 2013, pp. 1310--1318.

\bibitem{mohajerin2019multistep}
N.~Mohajerin and S.~L. Waslander, ``Multistep prediction of dynamic systems with recurrent neural networks,'' \emph{IEEE transactions on neural networks and learning systems}, vol.~30, no.~11, pp. 3370--3383, 2019.

\bibitem{xu2022review}
J.~Xu, M.~Kovatsch, D.~Mattern, F.~Mazza, M.~Harasic, A.~Paschke, and S.~Lucia, ``A review on {AI} for smart manufacturing: Deep learning challenges and solutions,'' \emph{Applied Sciences}, vol.~12, no.~16, p. 8239, 2022.

\bibitem{bonassi2022survey}
F.~Bonassi, M.~Farina, J.~Xie, and R.~Scattolini, ``{O}n {R}ecurrent {N}eural {N}etworks for learning-based control: recent results and ideas for future developments,'' \emph{Journal of Process Control}, vol. 114, pp. 92--104, 2022.

\bibitem{schoukens2019nonlinear}
J.~Schoukens and L.~Ljung, ``Nonlinear system identification: A user-oriented road map,'' \emph{IEEE Control Systems Magazine}, vol.~39, no.~6, pp. 28--99, 2019.

\bibitem{rawlings2017model}
J.~B. Rawlings, D.~Q. Mayne, and M.~Diehl, \emph{Model predictive control: theory, computation, and design}.\hskip 1em plus 0.5em minus 0.4em\relax Nob Hill Publishing Madison, WI, 2017, vol.~2.

\bibitem{wu2019machine}
Z.~Wu, A.~Tran, D.~Rincon, and P.~D. Christofides, ``Machine learning-based predictive control of nonlinear processes. part i: Theory,'' \emph{AIChE Journal}, vol.~65, no.~11, 2019.

\bibitem{levin1996control}
A.~Levin and K.~Narendra, ``Control of nonlinear dynamical systems using neural networks: controllability and stabilization,'' \emph{IEEE Transactions on Neural Networks}, vol.~4, no.~2, pp. 192--206, 1993.

\bibitem{terzi2021lstm}
E.~Terzi, F.~Bonassi, M.~Farina, and R.~Scattolini, ``Learning model predictive control with long short-term memory networks,'' \emph{International Journal of Robust and Nonlinear Control}, vol.~31, no.~18, pp. 8877--8896, 2021.

\bibitem{lanzetti2019recurrent}
N.~Lanzetti, Y.~Z. Lian, A.~Cortinovis, L.~Dominguez, M.~Mercang{\"o}z, and C.~Jones, ``Recurrent neural network based {MPC} for process industries,'' in \emph{2019 18th European Control Conference (ECC)}.\hskip 1em plus 0.5em minus 0.4em\relax IEEE, 2019, pp. 1005--1010.

\bibitem{nagy2007model}
Z.~K. Nagy, ``Model based control of a yeast fermentation bioreactor using optimally designed artificial neural networks,'' \emph{Chemical engineering journal}, vol. 127, no. 1-3, pp. 95--109, 2007.

\bibitem{wong2018recurrent}
W.~C. Wong, E.~Chee, J.~Li, and X.~Wang, ``Recurrent neural network-based model predictive control for continuous pharmaceutical manufacturing,'' \emph{Mathematics}, vol.~6, no.~11, p. 242, 2018.

\bibitem{bayer2013discrete}
F.~Bayer, M.~B{\"u}rger, and F.~Allg{\"o}wer, ``Discrete-time incremental {ISS}: A framework for robust {NMPC},'' in \emph{2013 European Control Conference (ECC)}.\hskip 1em plus 0.5em minus 0.4em\relax IEEE, 2013, pp. 2068--2073.

\bibitem{bonassi2022imc}
F.~Bonassi and R.~Scattolini, ``Recurrent neural network-based {I}nternal {M}odel {C}ontrol of unknown nonlinear stable systems,'' \emph{European Journal of Control}, p. 100632, 2022.

\bibitem{bonassi2021nnarx}
F.~Bonassi, M.~Farina, and R.~Scattolini, ``Stability of discrete-time feed-forward neural networks in {NARX} configuration,'' in \emph{19th IFAC Symposium on System Identification (SYSID 2021)}, 2021.

\bibitem{bonassi2022offset}
F.~Bonassi, M.~Farina, J.~Xie, and R.~Scattolini, ``{A}n {O}ffset-{F}ree {N}onlinear {MPC} scheme for systems learned by {N}eural {NARX} models,'' in \emph{61st IEEE Conference on Decision and Control (CDC)}, 2022.

\bibitem{bonassi2020lstm}
F.~Bonassi, E.~Terzi, M.~Farina, and R.~Scattolini, ``{LSTM} neural networks: {I}nput to state stability and probabilistic safety verification,'' in \emph{Learning for Dynamics and Control}.\hskip 1em plus 0.5em minus 0.4em\relax PMLR, 2020, pp. 85--94.

\bibitem{bonassi2021stability}
F.~Bonassi, M.~Farina, and R.~Scattolini, ``On the stability properties of gated recurrent units neural networks,'' \emph{Systems \& Control Letters}, vol. 157, p. 105049, 2021.

\bibitem{bonassi2021nonlinear}
F.~Bonassi, C.~F. Oliveira~da Silva, and R.~Scattolini, ``Nonlinear {MPC} for {O}ffset-{F}ree {T}racking of systems learned by {GRU} {N}eural {N}etworks,'' in \emph{3rd IFAC Conference on Modelling, Identification and Control of Nonlinear Systems (MICNON 2021)}, 2021.

\bibitem{angeli2002lyapunov}
D.~Angeli, ``A lyapunov approach to incremental stability properties,'' \emph{IEEE Transactions on Automatic Control}, vol.~47, no.~3, pp. 410--421, 2002.

\bibitem{todeschini2014adaptive}
F.~Todeschini, M.~Corno, G.~Panzani, S.~Fiorenti, and S.~M. Savaresi, ``Adaptive cascade control of a brake-by-wire actuator for sport motorcycles,'' \emph{IEEE/ASME Transactions on Mechatronics}, vol.~20, no.~3, pp. 1310--1319, 2014.

\bibitem{corno2015BBWmodeling}
M.~Corno, F.~Todeschini, G.~Panzani, and S.~M. Savaresi, ``Modeling and parameter identification of a brake-by-wire actuator for racing motorcycles,'' in \emph{Control-Oriented Modelling and Identification: Theory and Practice}.\hskip 1em plus 0.5em minus 0.4em\relax The Institution of Engineering and Technology, 2015, p. 329 – 363.

\bibitem{bonassi2023reconciling}
F.~Bonassi, ``Reconciling deep learning and control theory: recurrent neural networks for model-based control design,'' 2023, {D}octoral {D}issertation.

\end{thebibliography}

\end{document}